\def\ps@headings{%
\def\@oddhead{\mbox{}\scriptsize\rightmark \hfil \thepage}%
\def\@evenhead{\scriptsize\thepage \hfil \leftmark\mbox{}}%
\def\@oddfoot{}%
\def\@evenfoot{}}
\newtheorem{theorem}{Theorem}[section]
\newtheorem{lemma}[theorem]{Lemma}
\newtheorem{proposition}[theorem]{Proposition}
\newtheorem{corollary}[theorem]{Corollary}
\begin{document}

\title{Maximizing System Throughput by Cooperative Sensing in Cognitive Radio Networks}

\author{\IEEEauthorblockN{Shuang Li}
\IEEEauthorblockA{
Email:
li.908@osu.edu}
\and
\IEEEauthorblockN{Zizhan Zheng}
\IEEEauthorblockA{
Email:
zhengz@ece.osu.edu}
\and
\IEEEauthorblockN{Eylem Ekici}
\IEEEauthorblockA{
Email:
ekici@ece.osu.edu}
\and
\IEEEauthorblockN{Ness Shroff}
\IEEEauthorblockA{
Email:
shroff@ece.osu.edu}}
\maketitle

\begin{abstract}
Cognitive Radio Networks allow unlicensed users to
opportunistically access the licensed spectrum without causing
disruptive interference to the primary users (PUs). One of the main
challenges in CRNs is the ability to detect PU transmissions. Recent
works have suggested the use of secondary user (SU) cooperation over
individual sensing to improve sensing accuracy. In this paper, we
consider a CRN consisting of a single PU and multiple SUs to study the
problem of maximizing the total expected system throughput. We propose
a Bayesian decision rule based algorithm to solve the problem
optimally with a constant time complexity. To prioritize PU
transmissions, we re-formulate the throughput maximization problem by
adding a constraint on the PU throughput. The constrained
optimization problem is shown to be NP-hard and solved via a greedy
algorithm with pseudo-polynomial time complexity that achieves strictly greater than $1/2$ of the optimal solution. We also investigate
the case for which a constraint is put on the sensing time overhead,
which limits the number of SUs that can participate in cooperative
sensing. We reveal that the system throughput is monotonic over the
number of SUs chosen for sensing. We illustrate the efficacy of the
performance of our algorithms via a numerical investigation.
\end{abstract}


\section{Introduction}
\label{sec:intro}
Cognitive radio networks (CRNs) have been proposed to address the
spectrum scarcity problem by allowing unlicensed users (secondary users, SUs) to access licensed spectrum on the condition of not disrupting the
communication of licensed users (primary users, PUs).
To this end, SUs sense licensed channels to detect the primary user
(PU) activities and find the underutilized ``white spaces''. 
FCC has opened the TV bands for unlicensed access \cite{FCC}\footnote{The recent
FCC ruling requires the use of central TV Band usage databases to verify 
spectral availability. While respecting this ruling, our work explores local
cooperative methods to improve sensing accuracy with the potential outcome of
relieving this burdensome requirement.}, and IEEE has formed a
working group (IEEE 802.22 \cite{ieee}) to regulate the unlicensed access without interference. Many other organizations are also making efforts on the spectrum access policy in the CRN environment, e.g., DARPA's `Next Generation' (XG) program \cite{4221472} mandates cognitive radios to sense signals and prevent interference to existing military and civilian radio systems. To avoid the interference to PUs, sensing becomes
an indispensable part of CRN design.

Sensing can be performed via several methods, including
energy detection, cyclostationary feature detection, and compressed sensing \cite{survey}. Energy detection is a simple method 
and requires no a priori knowledge of PU signals \cite{1447503}. Its main disadvantage is its decreased accuracy in face of fading, 
shadowing, and unknown noise power profiles. For instance, if an SU 
suffers from shadowing or heavy fading, the sensed signal tends to be 
weak while the PU is transmitting, leading to incorrect decisions. 
To address these problems while maintaining sensing simplicity, 
cooperative sensing schemes that fuse the sensing results of multiple SUs 
have been proposed \cite{Ganesan05}\cite{Mishra06}\cite{5169958}.  

Cooperative sensing overcomes shortcomings of individual sensing results 
by jointly processing observations. SUs in a locality report their 
individual sensing results, which are then used in a predefined
decision rule to optimize an objective function.
Examples of such functions include maximizing sensing accuracy (generally,
a function of false alarm probability and mis-detection probability) or
maximizing the system throughput. Aside from maximizing sensing accuracy related metrics, cooperative sensing schemes are also designed to estimate the maximum transmit power for SUs so that they do not cause disruptive interference to PUs \cite{4927466}. On the other hand, cooperative sensing 
incurs additional sensing delay viz a viz individual sensing.

Three main categories of decision rules have been identified in \cite{survey}: \emph{Soft Combining, Quantized Soft Combining,} and \emph{Hard
Combining}. In the first two categories, the sensing results are sent
to the fusion center with little or no processing, while in the last
one, binary local decisions are usually reported.  Similar to sensor
networks, linear fusion rules are widely applied to achieve a
cooperative decision, such as AND, OR and majority rules \cite{5169958}. In
addition, a more advanced fusion technique that utilizes statistical
knowledge \cite{4453896} has been devised to capture the correlation between SUs
in cooperative sensing. However, the resulting algorithm is suboptimal
and its approximation factor is unknown. None of the above-mentioned
works identify optimal decision rules for general decision structures
and they require decision rules to assume particular forms (e.g., linear)
for optimality analysis.

In this paper, we design an optimal data fusion rule to (hard)
combining of the reported sensing result. More specifically, we aim to
maximize the system throughput in a CRN composed of a single PU (i.e.,
single channel) and several SUs. While the target system is a
simplified one, it is helpful in revealing the challenges associated
with the design of optimal fusion rules.  Moreover, the resulting
algorithms can easily be generalized to more complex systems comprised
of multiple channels, where sensing decisions are made per
channel. Our main contributions can be summarized as follows:

\begin{itemize}
\item In contrast to previous works that restrict the class of fusion rules, we propose a Bayesian decision rule based algorithm
to solve the throughput maximization problem optimally
with constant time complexity.
 
\item To guarantee resources for the PU, we re-formulate the
problem by adding a constraint on the PU throughput.
This constrained problem is shown to be NP-hard by reducing the
classical partition problem \cite{Garey} to it. A greedy algorithm is
obtained with pseudo-polynomial time complexity. This
approximation algorithm is analytically shown to achieve strictly greater than $1/2$ of the optimal solution.

\item We investigate systems where limited sensing overhead
is allowed, i.e., the number of sensing SUs is
restricted. Our theoretical results show
that the performance of cooperative sensing is monotonic
over the number of SUs used for sensing. However, the characterization
of the upper bound on the required number of sensing nodes remains
elusive.
  
\end{itemize}    

The paper is organized as follows: Related work is presented in Section~\ref{sec:related}. In Section~\ref{sec:model}, the system model is introduced. The system throughput maximization problem is formulated in Section~\ref{sec:max}, and solved optimally via Bayesian decision rule. In Section~\ref{sec:PU_thru}, the constrained maximization problem is formulated, which is shown to be NP-hard. A pseudo-polynomial time greedy algorithm is proposed with an approximation factor strictly greater than $1/2$. Another direction is considered in Section~\ref{sec:uncertain_set} where the system throughput is maximized subject to a constraint on the number
of sensing SUs used. 
In Section~\ref{sec:simu}, numerical results are presented for the performance of our algorithms. The paper is concluded in Section~\ref{sec:con}.  

\section{Related Work}
\label{sec:related}
Cooperative sensing solutions have been investigated in recent years. They rely on multiple SUs to exchange sensing results or a central controller to collect the sensing results from SUs. The network is usually divided into clusters and each cluster head makes the decision on the channel occupancy. Collaborations among SUs have been shown to improve the efficiency of spectrum access and allow the relaxation of constraints at individual SUs \cite{5054703}\cite{1542650}. One branch of the papers in cooperative sensing assume that the length of sensing time at individual SUs is proportional to the sensing accuracy. However, longer sensing time decreases the transmission time. The trade-off is called the \emph{sensing efficiency} problem and is discussed in \cite{Lee_2008} and \cite{Min}. In our work, we assume the observation time at each SU is fixed so that the individual sensing accuracy does not depend on it. We focus on the optimal decision rule based on the sensing results collected. 

Decision rules so far mainly focus on AND, OR, majority rules and other linear rules. Zhang et. al. \cite{4533677} show that the optimal fusion rule to minimize the cooperative sensing error rate is the half-voting rule in most cases. They show that AND or OR 
rules are optimal only in rare cases. However, other rules with more 
complicated forms have not been considered in \cite{4533677}. Based on these observations, a fast spectrum sensing algorithm is proposed for a large network where not all SUs are required for sensing while satisfying a given error bound. However, the optimal number of sensing nodes and the complexity of this problem have not been discussed. In \cite{5169958}, the SU throughput is maximized subject to sufficient protection provided to PUs. The optimal $k$-out-of-$N$ fusion rule is determined and the sensing/throughput trade-off is also analyzed. As in \cite{4533677}, no fusion rules of general forms are considered. Thus, the optimization is restricted to a small fusion rule domain. Shahid et. al. \cite{Shahid_Kamruzzaman_2010} consider the spatial variation of SUs and the fusion rule is a weighted combination of SU observations. The weight depends on the received power and path loss at each SU. Though more advanced than AND, OR, and majority rules, the weighted form is restricted to the linear function domain. In \cite{Fan_Jiang_2010}, optimal multi-channel cooperative sensing algorithms are considered to maximize the SU throughput subject to per channel detection probability constraints. The resulting non-convex problem is solved by an iterative algorithm. Compared to \cite{Fan_Jiang_2010}, our work focuses on the maximization of the system throughput, including the PUs and SUs. Although we only consider a single-channel network, which is a simplification made on the model, our decision algorithms can be applied for each channel individually. Moreover, a soft decision rule is considered in \cite{Fan_Jiang_2010}, which requires significant amount of data to be transmitted to the coordinator while our hard decision rule requires only one bit sent from each SU. 

\section{System Model}
\label{sec:model}
We consider a time-slotted cognitive radio network in which a PU network, consisting of a PU base station (PU-BS) and PU receivers, co-exists in the same area with an SU base station (SU-BS) and $M$ SUs (Figure~\ref{fig:model}). We focus on the PU transmissions over a particular channel. We consider uplink part for the SU system, i.e., only one SU can be active and transmit to the SU-BS at any given time. Some PU receivers may lie in the interference range of SUs such as PU $1$ in Figure~\ref{fig:model}. Any transmission from these SUs such as SUs $1$, $2$, and $3$ in Figure~\ref{fig:model} may cause interference to those PU receivers. We denote the set of SUs whose uplink transmission causes interference to PU receivers by $S$ and $|S|=N$ ($M\ge N$). They are indexed from $1$ to $N$. SUs outside $S$ can use the channel to transmit at any time slot without causing interference to the PUs.     

\begin{figure}[tb]
    \begin{center}
    \setlength{\unitlength}{1in}
    \includegraphics[scale=0.4]{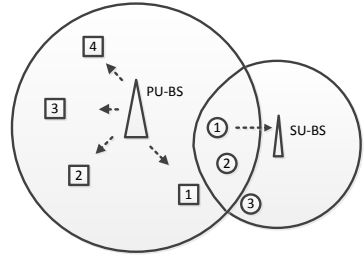}
    \end{center}
\vspace{-1.2em}
\caption{System model of an SU network overlayed with a PU network.}
\vspace{-1.8em}
\label{fig:model}
\end{figure}

SUs in $S$ are close to the PU network and they may sense the channel cooperatively to reduce the sensing errors. The sensing results of individual SUs are assumed to be independent. Let $B$ represent the PU activity such that $B=1$ if PU is active, and $B=0$ otherwise. Let $P_f^i$ denote the {\bf probability of a false alarm} for SU $i$, which is the probability that SU $i$ senses the PU to be active given that the PU is actually idle. $P_m^i$ represents the {\bf probability of mis-detection} for SU $i$, which is the probability that SU $i$ senses the PU to be idle given that the PU is actually active. 

{\bf Cooperative Sensing}: Multiple SUs are chosen to sense the channel and the SU-BS predicts the PU activity by collecting the sensing results from these SUs. We denote the set of SUs that participate in the cooperative sensing as $S_0$, where $|S_0|=k$. Note that $S_0 \subseteq S$. In the cooperative sensing model, we assume the SU-BS collects sensing results from SUs in $S_0$. 

{\bf Cooperative Sensing Indicator}: The observation of the PU activity by SU $i$ is denoted by $o_i$. $o_i=1$ indicates that SU $i$ observes the PU to be active, while $o_i=0$ indicates that SU $i$ observes the PU to be idle. In this paper, our objective is to characterize $S_0$ and estimate the PU activity based on observations from $S_0$ (called the decision rule). The decision rule is denoted as a function $f: \Omega ^{k}\rightarrow \Omega$ where $\Omega=\{0,1\}$. The observations form a vector $\boldsymbol{o}$, where $\boldsymbol{o}\in \Omega ^{k}$ while the decision is denoted by $O$ where $O\in \Omega$. The false alarm probability of cooperative sensing is denoted by $P_f^c=P(O=1|B=0)$. The mis-detection probability of cooperative sensing is denoted by $P_m^c=P(O=0|B=1)$. One time slot is divided into a control slot $T_c$ and a data slot $T_d$ where $T_c+T_d=1$ (Figure~\ref{fig:slot}). In the control slot, the SU-BS collects sensing results from $S_0$ and notifies an SU in $S$ if the cooperative sensing result is ``idle" ($O=0$). If the PU is active (mis-detection), the PU transmission will be collided with the transmission from the SU. The length $T_c$ of the control slot is regarded as the sensing overhead and assumed to be constant throughout the paper. It means that a fixed time period is allocated for cooperative sensing in each slot.  

\begin{figure}[tb]
    \begin{center}
    \setlength{\unitlength}{1in}
    \includegraphics[width=0.3\textwidth]{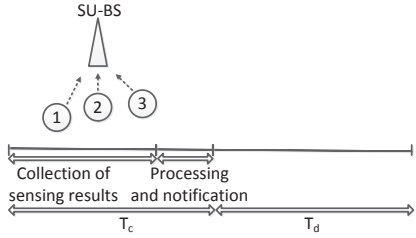}
    \end{center}
\vspace{-1.2em}
\caption{Control slot $T_c$ and data slot $T_d$.}
\vspace{-1.8em}
\label{fig:slot}
\end{figure}   

The uplinks of SUs in $S$ are assumed to have the same capacity which is normalized to $1$. If the decision of the cooperative sensing at the SU-BS is ``idle", the SU-BS notifies one of the SUs in $S$ (not limited to $S_0$, the sensing set) to transmit. We assume SUs in $S$ are always backlogged. The scheduling of the transmitting SUs is beyond the scope of this paper. However, any work-conserving scheduling policy operating on idle slots can be used together with the decision rule to maximize the total system throughput. We let ${\pi}_0$ denote the probability that the PU is idle and we assume that the prior distribution of PU activity is acquired over time accurately. Note that we do not restrict the PU activity to any specific distribution except that it does not change within one time slot. The average throughput of PUs whose transmission would be interfered by SUs in $S$ is denoted as $\gamma$. Table~\ref{tab:not} summarizes the notations used in the paper.
\begin{table}[t]
\centering
\caption{{\scriptsize Notation List}}
\vspace{-1em}
\begin{tabular}{|c|l|}
\hline
{\scriptsize Symbol}&{\footnotesize Meaning}\\\hline\hline
{\scriptsize $M$}&{\scriptsize Total number of SUs in the secondary network}\\\hline
{\scriptsize $S$}&{\scriptsize Set of SUs which cause interference to PU receivers}\\\hline
{\scriptsize $N$}&{\scriptsize $|S|$}\\\hline
{\scriptsize $S_0$}&{\scriptsize Set of SUs that are chosen to sense the channel. $S_0\subseteq S$}\\\hline
{\scriptsize $k$}&{\scriptsize $|S_0|$}\\\hline
{\scriptsize $P_f^i$}&{\scriptsize False alarm probability of SU $i$}\\\hline
{\scriptsize $P_m^i$}&{\scriptsize Mis-detection probability of SU $i$}\\\hline
{\scriptsize $P_f^c$}&{\scriptsize False alarm probability of cooperative sensing}\\\hline
{\scriptsize $P_m^c$}&{\scriptsize Mis-detection probability of cooperative sensing}\\\hline
{\scriptsize $T_c$}&{\scriptsize Control slot}\\\hline
{\scriptsize $T_d$}&{\scriptsize Data slot}\\\hline
{\scriptsize ${\pi}_0$}&{\scriptsize Probability that the PU is idle}\\\hline
{\scriptsize $\gamma$}&{\scriptsize Average throughput of PUs in the interference range of a SU}\\\hline
\end{tabular}
\label{tab:not}
\vspace{-1em}
\end{table}

The SU communication follows a protocol with the following outline:

1) SUs report $P_m^i$'s and $P_f^i$'s to SU-BS;

2) SU-BS determines the sensing set $S_0$ and the decision rule $f$ based on $P_m^i$, $P_f^i$'s and the optimization metric;

3) SU-BS notifies SUs in $S_0$ with an $ACK$ and also assigns each one of them a $SEQ$ number for reporting sensing results;

4) SUs receiving an $ACK$ sense the channel and report the results to SU-BS in the order of $SEQ$;

5) SU-BS makes the decision of the PU activity based on the sensing results and $f$ and schedules an SU for transmission if the decision is $0$ (PU idle).

\section{System Throughput Maximization}
\label{sec:max}
In this section, we formulate the cooperative sensing problem with the assumption that $S_0=S$, that is, the sensing results from all SUs in $S$ are reported to SU-BS within $T_c$. SUs outside $S$ can transmit without causing interference to the PUs. Thus, their performance is independent of the choice of the sensing set or the decision rule. Our goal is to maximize the sum of the expected throughput of SUs in $S$ and that of the PUs whose transmission may be interfered by the SUs. It is equivalent to maximizing the expected throughput of the system with PU-SU co-existence. 

\subsection{Problem Formulation}
\label{subsec:max_formulation}

Given $B=0$ (the PU is idle), the probability of a particular observation vector $\boldsymbol{o}$ occurring is 
\vspace{-0.5em}
\begin{equation}
P(\boldsymbol{o}|B=0)=\prod\limits_{i\in S,o_i=1}{P_f^i} \prod\limits_{j\in S,o_j=0}{(1-P_f^j)}.
\vspace{-0.5em}
\end{equation}

The sum of all $P(\boldsymbol{o}|B=0)$'s with $f(\boldsymbol{o})=0$ is given as 
\vspace{-0.2em}

\begin{equation}
P(O=0|B=0)=\sum\limits_{f(\boldsymbol{o})=0}{P(\boldsymbol{o}|B=0)}.
\vspace{-0.5em}
\label{eq:correct0}
\end{equation}
Then, the false alarm probability of cooperative sensing is
\vspace{-0.5em}
\begin{equation} 
P_f^c=1-P(O=0|B=0)=1-\sum\limits_{f(\boldsymbol{o})=0}{P(\boldsymbol{o}|B=0)}.
\vspace{-0.5em}
\label{eq:bin0prob}
\end{equation}

Likewise, given $B=1$ (the PU is active), the probability of a particular observation vector $\boldsymbol{o}$ occurring is 
\vspace{-0.2em}
\begin{equation}
P(\boldsymbol{o}|B=1)=\prod\limits_{i\in S,o_i=1}{(1-P_m^i)} \prod\limits_{j\in S,o_j=0}{P_m^j}.
\vspace{-0.5em}
\end{equation}

The sum of all $P(\boldsymbol{o}|B=1)$'s with $f(\boldsymbol{o})=1$ is given as 
\vspace{-0.5em}
\begin{equation}
P(O=1|B=1)=\sum\limits_{f(\boldsymbol{o})=1}{P(\boldsymbol{o}|B=1)}.
\vspace{-0.5em}
\label{eq:correct1}
\end{equation}

Then, the mis-detection probability of cooperative sensing is 
\vspace{-0.5em}
\begin{equation} 
P_m^c=1-P(O=1|B=1)=1-\sum\limits_{f(\boldsymbol{o})=1}{P(\boldsymbol{o}|B=1)}.
\vspace{-0.5em}
\label{eq:bin1prob}
\end{equation}

Note that Equation~(\ref{eq:correct0}) is the conditional probability that SU-BS correctly identifies the PU activity when it is idle so that one SU could transmit successfully; Equation~(\ref{eq:correct1}) is the conditional probability that SU-BS correctly detects the PU is active so that no SU would transmit and the PU could transmit successfully. Accordingly, the expected throughput of the SUs can be represented by
\vspace{-0.5em}
\[
(1-T_c)P(B=0,O=0)=(1-T_c){\pi}_0 P(O=0|B=0)
\vspace{-0.2em}\]
\begin{equation}
=(1-T_c){\pi}_0 \sum\limits_{f(\boldsymbol{o})=0}{P(\boldsymbol{o}|B=0)},\vspace{-0.3em}
\end{equation}
since the uplinks of SUs in $S$ are assumed to have capacity $1$ and only one of them could be scheduled in each time slot. The expected throughput of the PU can be represented by
\vspace{-0.5em}
\begin{equation}
\gamma P(O=1|B=1)=\gamma \sum\limits_{f(\boldsymbol{o})=1}{P(\boldsymbol{o}|B=1)}
\vspace{-0.5em}
\end{equation}
since $\gamma$ is the average throughput of the PU whose transmission would be interfered by SUs in $S$. The problem is then formulated as follows:

Problem (A):
\[\max\limits_{f}{(1-T_c){\pi}_0 \sum\limits_{f(\boldsymbol{o})=0}{P(\boldsymbol{o}|B=0)}+\gamma \sum\limits_{f(\boldsymbol{o})=1}{P(\boldsymbol{o}|B=1)}}\]

\subsection{Optimal Solution with Bayesian Decision Rule}
\label{subsec:Bayesian_max}

We show that Problem (A) can be converted to a Bayesian Decision problem. Algorithm~\ref{alg:Bayesian} is then proposed based on Bayesian decision rule to minimize the posterior expected loss \cite{Berger} and it is of constant time complexity.

\begin{algorithm}[t]
    \caption{{\footnotesize Bayesian Decision Rule Based Algorithm for maximizing the system throughput (given $\boldsymbol{o}$, decide $O$)}}\label{alg:Bayesian}
    \begin{algorithmic}[1]
	{\footnotesize \IF{$(1-T_c){\pi}_0 \prod\limits_{o_i=1}{P_f^i}\prod\limits_{o_j=0}{(1-P_f^j)}\ge \gamma \prod\limits_{o_i=1}{(1-P_m^i)}\prod\limits_{o_j=0}{P_m^j}$}
		\STATE $O\leftarrow 0$
	\ELSE
		\STATE $O\leftarrow 1$
	\ENDIF}
    \end{algorithmic}
    \vspace{-0.2em}
\end{algorithm}

Problem (A) is equivalent to Problem~(\ref{eq:min}) in terms of optimal $f$. \vspace{-1em}
\[\max\limits_{f}{L(B=0,O=1)\left[{\pi}_0\sum\limits_{f(\boldsymbol{o})=1}{P(\boldsymbol{o}|B=0)}\right]}\]
\vspace{-0.5em}
\begin{equation}
+L(B=1,O=0) \left[(1-{\pi}_0)\sum\limits_{f(\boldsymbol{o})=0}{P(\boldsymbol{o}|B=1)}\right],
\label{eq:min}
\end{equation}
where $L(B,O)$ is the loss of decision $O$ based on observation $\boldsymbol{o}$, which is a negative number. $L(B=0,O=1)=-(1-T_c)$ and $L(B=1,O=0)=-\frac{\gamma}{1-{\pi}_0}$. Thus Equation~(\ref{eq:min}) is the \emph{posterior expected loss} of decision $O$ (Definition $8$ of Chapter 4.4 in \cite{Berger}). Using the Bayesian decision rule, Problem~(\ref{eq:min}) can be solved optimally \cite{Berger}: given $\boldsymbol{o}$, the decision $O=1$ if $|L(B=0,O=1)|{\pi}_0 P(\boldsymbol{o}|B=0)<|L(B=1,O=0)|(1-{\pi}_0)P(\boldsymbol{o}|B=1)$ and $O=0$ otherwise. Algorithm~\ref{alg:Bayesian} is designed accordingly.   

\section{Guaranteeing a Target PU Throughput}
\label{sec:PU_thru}
In this section, we investigate the maximum throughput problem with a PU throughput constraint. With higher priority, a minimum PU throughput is guaranteed in the problem formulation. We first show that this constrained problem is NP-hard by reducing the classical partition problem \cite{Garey} to it. Then a greedy approximation algorithm is proposed to achieve strictly greater than $1/2$ of the optimal solution. The complexity of the algorithm is shown to be pseudo-polynomial by solving a two-dimensional dynamic programming problem. 

\subsection{Problem Formulation and Properties}
\label{subsec:form}
We formulate the constrained optimization problem as follows:

Problem (B):
\vspace{-0.5em}
\[\max\limits_{f}{(1-T_c){\pi}_0 \sum\limits_{f(\boldsymbol{o})=0}{P(\boldsymbol{o}|B=0)}+\gamma \sum\limits_{f(\boldsymbol{o})=1}{P(\boldsymbol{o}|B=1)}}\]
\vspace{-0.8em}
\begin{equation}
\mbox{s.t. }\sum\limits_{f(\boldsymbol{o})=1}{P(\boldsymbol{o}|B=1)}\ge \alpha .
\vspace{-0.5em}
\label{eq:PUcons} 
\end{equation}
Equation~(\ref{eq:PUcons}) is the constraint we put on Problem~(B) where the expected PU throughput must be no less than a preset system-dependent threshold. It is equivalent to $1-\sum\limits_{f(\boldsymbol{o})=1}{\prod\limits_{i\in S,o_i=1}{(1-P_m^i)}\prod\limits_{j\in S,o_j=0}{P_m^j}}\le 1-\alpha $, where $1-\alpha$ is the collision factor. This can be interpreted as the probability that a PU transmission colliding with an SU transmission being no greater than $1-\alpha$. Problem (B) maximizes the expected system throughput given that the lowest PU throughput can be met considering the high priority of the PU in cognitive radio networks.

By observing the structure of Problem (B), we state Lemma~\ref{lem:Bayesian} that shows the optimal assignment of observations with $G(\boldsymbol{o})<H(\boldsymbol{o})$ where 
$G(\boldsymbol{o})=(1-T_c){\pi}_0 P(\boldsymbol{o}|B=0)$ and 
$H(\boldsymbol{o})=\gamma P(\boldsymbol{o}|B=1)$. We define $f^*$ as the optimal solution to Problem~(B).

\begin{lemma}
\label{lem:Bayesian}
In the optimal solution to Problem (B), we have $f^*(\boldsymbol{o})=1$ for all $G(\boldsymbol{o})<H(\boldsymbol{o})$.
\end{lemma}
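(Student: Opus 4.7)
The plan is a straightforward exchange (swapping) argument. The objective in Problem~(B) can be rewritten as $\sum_{f(\boldsymbol{o})=0} G(\boldsymbol{o}) + \sum_{f(\boldsymbol{o})=1} H(\boldsymbol{o})$, so the contribution of any single observation vector $\boldsymbol{o}$ is $G(\boldsymbol{o})$ if it is assigned to $0$ and $H(\boldsymbol{o})$ if assigned to $1$. Therefore, ignoring the constraint, any $\boldsymbol{o}$ with $G(\boldsymbol{o}) < H(\boldsymbol{o})$ strictly prefers the assignment $f(\boldsymbol{o})=1$.

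First I would suppose for contradiction that $f^*$ is optimal but there exists some $\boldsymbol{o}^\ast$ with $G(\boldsymbol{o}^\ast) < H(\boldsymbol{o}^\ast)$ and $f^\ast(\boldsymbol{o}^\ast)=0$. Define $f'$ to agree with $f^\ast$ everywhere except $f'(\boldsymbol{o}^\ast)=1$. Then the objective changes by exactly $H(\boldsymbol{o}^\ast)-G(\boldsymbol{o}^\ast) > 0$, so $f'$ strictly improves upon $f^\ast$.

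Next I would verify feasibility of $f'$. The PU-throughput constraint can be written as $\sum_{f(\boldsymbol{o})=1} P(\boldsymbol{o}|B=1) \ge \alpha$. Moving $\boldsymbol{o}^\ast$ from the $0$-set to the $1$-set can only increase the left-hand side by $P(\boldsymbol{o}^\ast|B=1) \ge 0$. Hence $f'$ still satisfies \eqref{eq:PUcons}, contradicting the optimality of $f^\ast$.

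I do not expect any real obstacle here; the only point requiring care is the direction of the constraint. Because flipping from $0$ to $1$ makes the constraint \emph{easier} (never harder) to satisfy, the exchange argument is one-sided and clean. The argument would fail if we tried the reverse direction (flipping from $1$ to $0$ for some $\boldsymbol{o}$ with $G(\boldsymbol{o})>H(\boldsymbol{o})$), which is exactly why Lemma~\ref{lem:Bayesian} only characterizes the easy side; the nontrivial assignments occur precisely on those $\boldsymbol{o}$ with $G(\boldsymbol{o})\ge H(\boldsymbol{o})$, and that is what the subsequent algorithmic treatment must handle.
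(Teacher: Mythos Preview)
Your proposal is correct and essentially identical to the paper's own proof: both argue by contradiction, flip an offending $\boldsymbol{o}$ with $G(\boldsymbol{o})<H(\boldsymbol{o})$ from $O=0$ to $O=1$, observe that the constraint~\eqref{eq:PUcons} is only loosened while the objective strictly increases, and conclude. Your write-up is slightly more explicit (decomposing the objective and naming the modified rule $f'$), but the argument is the same exchange step.
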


\begin{proof}
(Prove by contradiction) Assume that $f^*(\boldsymbol{o})=0$ for some $\boldsymbol{o}$ where $G(\boldsymbol{o})<H(\boldsymbol{o})$. Moving it from $O=0$ to $O=1$ increases $\sum\limits_{f(\boldsymbol{o})=1}{P(\boldsymbol{o}|B=1)}$ so that this operation still makes a feasible solution. Furthermore, the expected system throughput increases considering $G(\boldsymbol{o})<H(\boldsymbol{o})$, which makes a better solution than the current optimal one. It causes a contradiction. Hence, we have $f^*(\boldsymbol{o})=1$ for all $G(\boldsymbol{o})<H(\boldsymbol{o})$ in the optimal solution to Problem (B).
\end{proof}

With the property of Lemma~\ref{lem:Bayesian}, we only need to decide which observations with $G(\boldsymbol{o})\ge H(\boldsymbol{o})$ should be put in $O=1$ to solve Problem (B) optimally. We define $\chi=\{\boldsymbol{o}:G(\boldsymbol{o})\ge H(\boldsymbol{o})\mbox{ and }f^*(\boldsymbol{o})=1\}$, which is the set of observations that need to be moved to $O=1$ in the optimal solution; $\psi=\{\boldsymbol{o}:G(\boldsymbol{o})\ge H(\boldsymbol{o})\mbox{ and }f^*(\boldsymbol{o})=0\}$, which is the set of observations that stay in $O=0$ in the optimal solution. Moreover, we define $A=\sum\limits_{\boldsymbol{o}:G(\boldsymbol{o})<H(\boldsymbol{o})}{H(\boldsymbol{o})}$, which is the contribution of observations with $G(\boldsymbol{o})<H(\boldsymbol{o})$ in the optimal solution; $B=\sum\limits_{\boldsymbol{o}\in \chi}{G(\boldsymbol{o})}$, which is the contribution of observations in $\chi$ when put in $O=0$; $B'=\sum\limits_{\boldsymbol{o}\in \chi}{H(\boldsymbol{o})}$, which is the contribution of observations in $\chi$ when put in $O=1$ ($B\ge B'$); $C=\sum\limits_{\boldsymbol{o}\in \psi}{G(\boldsymbol{o})}$, which is the contribution of observations in $\psi$ when put in $O=0$; $C'=\sum\limits_{\boldsymbol{o}\in \psi}{H(\boldsymbol{o})}$, which is the contribution of observations in $\psi$ when put in $O=1$ ($C \ge C'$). Then, $A+B+C$ is the optimal solution to Problem (B) without the PU throughput constraint; $A+B'+C$ is the optimal solution to Problem (B), which is no greater than $A+B+C$. The optimal assignment is illustrated in Figure~\ref{fig:bin}. 

\begin{figure}[tb]
    \begin{center}
    \setlength{\unitlength}{1in}
    \includegraphics[scale=0.3]{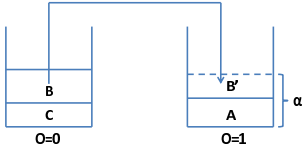}
    \end{center}
\vspace{-1.2em}
\caption{Assignment of observations with no constraint and the optimal assignment for Problem (B). $A+B+C$ is the optimal assignment with no constraint while $A+B'+C$ is the optimal solution to Problem (B).}
\label{fig:bin}
\vspace{-1.5em}
\end{figure} 

\subsection{Proof of NP-hardness}
\label{subsec:NP}

We focus on identifying the hardness of deciding observations with $G(\boldsymbol{o})\ge H(\boldsymbol{o})$ that should be put in $O=1$ in this section. It is shown to be NP-hard by reducing the classical partition problem \cite{Garey} to the subproblem of it in Theorem~\ref{thm:NP}. 

\begin{theorem}
Problem~(B) is NP-hard.
\label{thm:NP}
\end{theorem}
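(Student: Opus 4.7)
My plan is to reduce the classical PARTITION problem to Problem (B). Recall that PARTITION asks, given positive integers $a_1, \ldots, a_n$ with $\sum_i a_i = 2T$, whether there exists $I \subseteq \{1, \ldots, n\}$ with $\sum_{i \in I} a_i = T$.

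First, I would invoke Lemma~\ref{lem:Bayesian} to shrink the solution space of Problem (B). Since every optimal $f^*$ already sets $f^*(\boldsymbol{o}) = 1$ on observations with $G(\boldsymbol{o}) < H(\boldsymbol{o})$, what remains is to choose a subset $\chi$ of observations with $G(\boldsymbol{o}) \ge H(\boldsymbol{o})$ to move into $O = 1$. Each such move costs $G(\boldsymbol{o}) - H(\boldsymbol{o}) \ge 0$ in the system-throughput objective but contributes $H(\boldsymbol{o})$ toward the PU-throughput constraint. Thus Problem (B) is equivalent to a minimization-knapsack problem: minimize $\sum_{\boldsymbol{o} \in \chi}\bigl[G(\boldsymbol{o}) - H(\boldsymbol{o})\bigr]$ subject to $\sum_{\boldsymbol{o} \in \chi} H(\boldsymbol{o}) \ge \gamma\alpha - \sum_{\boldsymbol{o}:\, G(\boldsymbol{o}) < H(\boldsymbol{o})} H(\boldsymbol{o})$. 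Encoding PARTITION into this knapsack is the target of the reduction.

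Next, I would construct a polynomial-time map from a PARTITION instance $\{a_1, \ldots, a_n\}$ to a Problem (B) instance on $N$ SUs. The parameters $P_f^i, P_m^i, \pi_0, \gamma, T_c, \alpha$ should be chosen so that (i) the decision-relevant observations, i.e., those with $G(\boldsymbol{o}) \ge H(\boldsymbol{o})$, correspond one-to-one with the $n$ items, and (ii) each corresponding item $i$ has both its weight $G(\boldsymbol{o}) - H(\boldsymbol{o})$ and its value $H(\boldsymbol{o})$ proportional to $a_i$. With $\alpha$ chosen so the induced threshold equals the corresponding scaling of $T$, the knapsack optimum attains its lower bound (that scaling of $T$) if and only if PARTITION has a \emph{yes} answer, because any subset overshooting the value threshold must by the equal-weight-and-value construction also overshoot in cost. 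Since PARTITION is NP-hard and the construction uses polynomial-bit rational parameters, the decision version of Problem (B) inherits NP-hardness.

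The main obstacle is steps (i)--(ii): with $N$ SUs one has only $2N$ free real parameters but $2^N$ observation vectors, and the product structures $G(\boldsymbol{o}) = (1-T_c)\pi_0 \prod_{o_i = 1} P_f^i \prod_{o_j = 0}(1 - P_f^j)$ and $H(\boldsymbol{o}) = \gamma \prod_{o_i = 1}(1 - P_m^i) \prod_{o_j = 0} P_m^j$ tightly couple all observations. My planned approach is to choose $P_f^i, P_m^i$ (for example by fixing a common base value and perturbing it per SU) so that all but $n$ well-chosen observation vectors, say the unit basis vectors $e_i$ or their complements, fall into the regime $G(\boldsymbol{o}) < H(\boldsymbol{o})$ and are absorbed into $O = 1$ by Lemma~\ref{lem:Bayesian}, leaving only $n$ knapsack items whose individual $(G - H, H)$ pairs can each be tuned through a single $(P_f^i, P_m^i)$ to encode $a_i$. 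Verifying that such a tuning is simultaneously realizable for all $i$ with polynomial-bit rationals, and that no ``spurious'' observations sneak into the $G \ge H$ regime, is the most delicate part of the argument.
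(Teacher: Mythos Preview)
Your construction has a structural obstruction that parameter tuning cannot overcome. Because $G(\boldsymbol{o})$ and $H(\boldsymbol{o})$ are products over the coordinates, $\log\bigl(G(\boldsymbol{o})/H(\boldsymbol{o})\bigr)$ is an \emph{affine} function of the bits $o_1,\dots,o_N$; hence $\{\boldsymbol{o}:G(\boldsymbol{o})\ge H(\boldsymbol{o})\}$ is the intersection of a half-space with the hypercube. You want this set to be exactly $\{e_1,\dots,e_n\}$ (or their complements). But if $\boldsymbol{0}$ lies outside the half-space while every $e_i$ lies inside, the coefficient on each coordinate must be positive, which forces every $e_i+e_j$ to lie strictly deeper inside the half-space as well. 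So the ``decision-relevant'' observations can never be precisely the $n$ unit vectors, and your intended one-to-one correspondence with PARTITION items cannot be set up; the symmetric argument rules out the complements $\bar e_i$. More broadly, even if you somehow isolated $n$ vectors, simultaneously forcing both $G-H$ and $H$ on each of them to be proportional to a prescribed $a_i$---with only $2N$ parameters governing $2^N$ coupled products---is exactly the ``delicate part'' you flag, and you provide no mechanism for it.

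The paper's proof works \emph{with} this affine structure instead of against it. It uses $N$ SUs for an $N$-integer PARTITION instance, sets $(1-T_c)\pi_0=\gamma$ and (in effect) $P_f^i=P_m^i$ so that
\[
\log\frac{G(\boldsymbol{o})}{H(\boldsymbol{o})}=\sum_{o_i=0}y_i-\sum_{o_j=1}y_j,\qquad y_i=\log\frac{1-P_f^i}{P_m^i},
\]
with $y_i$ encoding the $i$-th integer. The threshold $\alpha$ is placed just above $\sum_{G<H}H$ (by at most $\min_{G\ge H}H$), so that moving \emph{any single} observation with $G\ge H$ into $O=1$ already restores feasibility; the optimal move is therefore the observation minimizing the nonnegative quantity $G(\boldsymbol{o})-H(\boldsymbol{o})$, equivalently the nonnegative signed sum $\sum_{o_i=0}y_i-\sum_{o_j=1}y_j$. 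That minimum is zero iff PARTITION has a yes answer. The idea you are missing is that the $2^N$ observation vectors \emph{already} enumerate all subsets of $\{1,\dots,N\}$, so PARTITION is encoded directly into ``does some observation achieve $\log(G/H)=0$'' rather than by trying to carve out $n$ independent knapsack items.
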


\begin{proof}
We first state the classical partition problem \cite{Garey} - Given $N$ positive integers: $y_1$, $\cdots$, $y_{N}$, is there a way to have them partitioned into two equal-sized subsets that have the same sum? For reduction, we construct an instance of Problem (B) by setting $(1-T_c){\pi}_0=\gamma$, $P_m^i+P_f^i<1$, $\alpha=\epsilon+\sum\limits_{\boldsymbol{o}:G(\boldsymbol{o})<H(\boldsymbol{o})}{H(\boldsymbol{o})}$ with $\epsilon\le \min\limits_{\boldsymbol{o}:G(\boldsymbol{o})\ge H(\boldsymbol{o})}{H(\boldsymbol{o})}$. For this instance, putting any $\boldsymbol{o}$ with $G(\boldsymbol{o})\ge H(\boldsymbol{o})$ to $O=1$ would make a feasible solution given that observations with $G(\boldsymbol{o})<H(\boldsymbol{o})$ have all been put in $O=1$. Obviously, choosing the observation with minimum non-negative $G(\boldsymbol{o})-H(\boldsymbol{o})$ would be the optimal solution. Note that $G(\boldsymbol{o})-H(\boldsymbol{o})=0$ is equivalent to $\log{\frac{G(\boldsymbol{o})}{H(\boldsymbol{o})}}=0$. By setting $\log{\frac{1-P_f^i}{P_m^i}}=-\log{\frac{P_f^i}{1-P_m^i}}=y_i$ for all $i$, we have $\log{\frac{G(\boldsymbol{o})}{H(\boldsymbol{o})}}=\sum\limits_{o_i=0,i=1,\cdots,N}{y_i}+\sum\limits_{o_j=1,j=1,\cdots,N}{-y_j}$. Now the instance becomes: given $N$ pairs of integers $(y_1,-y_1),\cdots,(y_N,-y_N)$, exactly one number should be chosen from each pair; with this constraint, what is the minimum non-negative sum? The reduction from the partition problem to this instance of Problem (B) can be done in polynomial time.

To verify the correctness of the reduction, we can check: if the minimum non-negative $G(\boldsymbol{o})-H(\boldsymbol{o})$ is $0$, that is, the optimal solution of the instance is $0$, we can answer ``Yes" to the partition problem; if it is positive, we can answer ``No" to the partition problem. If Problem (B) can be solved in polynomial time, then the partition problem can be solved in polynomial time as well. The partition problem is well-known to be NP-complete \cite{Garey}. Assuming P$\neq$NP, Problem (B) has been proven to be NP-hard.
\end{proof}

It has been shown in Theorem~\ref{thm:NP} that finding the observation with $G(\boldsymbol{o})$ closest to $H(\boldsymbol{o})$ from above is NP-hard. Hence, it is unlikely to find an efficient algorithm to solve Problem~(B) optimally. We will focus on the approximation algorithm design in Section~\ref{subsec:greedy}.  

\subsection{Greedy Approximation Algorithm}
\label{subsec:greedy}

We propose a greedy algorithm (Algorithm~\ref{alg:greedy}) that initially assigns all observations to $O=1$ and then moves observations with $G(\boldsymbol{o})\ge H(\boldsymbol{o})$ by $\frac{G(\boldsymbol{o})}{H(\boldsymbol{o})}$ from the highest to lowest to $O=0$ until the feasibility constraint of Problem (B) is violated. By transforming Problem~(B) into the Knapsack Problem \cite{Vijay}, we will show that the algorithm achieves strictly greater than $1/2$ of the optimal solution for Problem (B), which is $A+B'+C$ in Figure~\ref{fig:bin}. Although the sum of $G(\boldsymbol{o})$ or $H(\boldsymbol{o})$ in the worst case has exponential number of terms, we will design a pseudo-polynomial time algorithm in Section~\ref{subsec:comp} considering its combinatorial nature. Ignoring rounding errors, the implementation calculates these sums accurately.

\begin{algorithm}[t]
    \caption{{\footnotesize Greedy Approximation Algorithm for Problem (B)}}\label{alg:greedy}
    {\scriptsize Input: $N$, $T_c$, ${\pi}_0$, $\gamma$, $\alpha$, $P_m^i$, $P_f^i$ for all $i$ \\Output: $f$ or ``infeasible"}
    \begin{algorithmic}[1]
    {\footnotesize \STATE $G(\boldsymbol{o})\leftarrow (1-T_c){\pi}_0 \prod\limits_{i\in S,o_i=1}{P_f^i}\prod\limits_{j\in S,o_j=0}{(1-P_f^j)}$ for all $\boldsymbol{o}$
	\STATE $H(\boldsymbol{o})\leftarrow \gamma \prod\limits_{i\in S,o_i=1}{(1-P_m^i)}\prod\limits_{j\in S,o_j=0}{P_m^j}$ for all $\boldsymbol{o}$
	\STATE $U\leftarrow \sum\limits_{\boldsymbol{o}}{H(\boldsymbol{o})}$
	\IF{$U<\alpha \times \gamma$}
	\STATE output ``infeasible" and return
    \ENDIF
    \STATE $f(\boldsymbol{o})\leftarrow 1$ for all $\boldsymbol{o}$, $sum1 \leftarrow \sum\limits_{\boldsymbol{o}:G(\boldsymbol{o})<H(\boldsymbol{o})}{H(\boldsymbol{o})}$
    \IF{$sum1\ge \alpha \times \gamma$}
		\STATE $f(\boldsymbol{o})=0$ for all $\boldsymbol{o}$ with $G(\boldsymbol{o})\ge H(\boldsymbol{o})$ and return
    \ENDIF
    \STATE Sort $\boldsymbol{o}$'s with $G(\boldsymbol{o})\ge H(\boldsymbol{o})$ in non-increasing order of $\frac{G(\boldsymbol{o})}{H(\boldsymbol{o})}$ and label them from $1$ to $l$
    \STATE $sum2\leftarrow 0$

    \FOR{$i=1$ to $l$}
		\STATE \textbf{if} $sum2 +H(\boldsymbol{o}_i)> U-\alpha\times \gamma$ \textbf{then} break

		\STATE $sum2\leftarrow sum2+H(\boldsymbol{o}_i)$, $f(\boldsymbol{o}_i)\leftarrow 0$

    \ENDFOR}

    \end{algorithmic}
\end{algorithm}

In Algorithm~\ref{alg:greedy}, observations are chosen by $\frac{G(\boldsymbol{o})}{H(\boldsymbol{o})}$ from the highest to the lowest and assigned to $O=0$ after those with $G(\boldsymbol{o})< H(\boldsymbol{o})$ are assigned to $O=1$. Ties are broken by putting observations with smaller $H(\boldsymbol{o})$ in the front. In Line~$3$, $U$ is assigned to be the sum of contributions of all observations if put in $O=1$ ($A+B'+C'$). In Lines~$4$-$5$, whether a feasible solution exists for the given input is checked by comparing the extreme case where all observations are assigned to $O=1$ with the threshold $\alpha \times \gamma$. In Line~$6$, observations are initialized to $O=1$. Lines~$6$-$8$ checks whether the feasibility constraint in Problem (B) has been satisfied under the initial assignment. If yes, observations with $G(\boldsymbol{o})\ge H(\boldsymbol{o})$ are assigned to $O=0$ by Bayesian decision rule. Lines~$9$-$13$ searches for observations with $G(\boldsymbol{o})\ge H(\boldsymbol{o})$ from the highest $\frac{G(\boldsymbol{o})}{H(\boldsymbol{o})}$ to lowest until $sum2 +H(\boldsymbol{o}_i)\le U-\alpha  \times \gamma$ is violated (Line~$12$). Note that $\sum\limits_{\boldsymbol{o}:f(\boldsymbol{o})=0}{H(\boldsymbol{o})}\le U-\alpha \times \gamma$ and $\sum\limits_{\boldsymbol{o}:f(\boldsymbol{o})=1}{H(\boldsymbol{o})}\ge \alpha \times \gamma$ (feasibility constraint) are equivalent since $\sum\limits_{\boldsymbol{o}}{H(\boldsymbol{o})}=U$. $f(\boldsymbol{o})$ of these observations are set to be $0$ (Line $13$) in the searching process. Next, we state Theorem~\ref{thm:greedy} that gives the approximation factor of Algorithm~\ref{alg:greedy}.

\begin{theorem}
\label{thm:greedy}
Algorithm~\ref{alg:greedy} achieves strictly greater than $1/2$ of the optimal solution to Problem (B).
\end{theorem}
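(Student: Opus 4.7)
The plan is to recast Problem~(B) as a $0$-$1$ knapsack, recognize Algorithm~\ref{alg:greedy} as its textbook greedy heuristic, and then combine the standard LP-relaxation bound with the additive baseline $U=\gamma$ to push the ratio strictly above $1/2$.

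First, I would invoke Lemma~\ref{lem:Bayesian} to freeze every $\boldsymbol{o}$ with $G(\boldsymbol{o})<H(\boldsymbol{o})$ at $O=1$ and reformulate the remaining decisions as choosing which $\boldsymbol{o}\in\{G\ge H\}$ to flip from $O=1$ to $O=0$. Starting from the ``all-$O=1$'' configuration (value $U=\sum_{\boldsymbol{o}}H(\boldsymbol{o})=\gamma$), flipping $\boldsymbol{o}$ adds a profit $p(\boldsymbol{o})=G(\boldsymbol{o})-H(\boldsymbol{o})\ge 0$ and consumes $w(\boldsymbol{o})=H(\boldsymbol{o})$ of the PU-side slack. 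The feasibility constraint $\sum_{f(\boldsymbol{o})=1}H(\boldsymbol{o})\ge\alpha\gamma$ rewrites as $\sum_{\boldsymbol{o}\in X}w(\boldsymbol{o})\le W:=U-\alpha\gamma$, so Problem~(B) is equivalent to the $0$-$1$ knapsack $\max\{\sum_{X}p(\boldsymbol{o}):\sum_{X}w(\boldsymbol{o})\le W\}$, and the objective of Problem~(B) at any feasible $X$ equals $U+\sum_{X}p(\boldsymbol{o})$.

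Second, because $G/H=1+p/w$, sorting by non-increasing $G/H$ is identical to sorting by non-increasing profit-to-weight ratio, so Algorithm~\ref{alg:greedy} is literally the classical greedy $0$-$1$ knapsack routine (with tie-breaking ``smaller $w$ first''). Let $K_g$ and $K^*$ denote the greedy and optimal knapsack profits, and let $\boldsymbol{o}_{j+1}$ be the first refused item, treating separately the trivial branches in which the algorithm outputs ``infeasible,'' the case $\mathrm{sum}_1\ge\alpha\gamma$ where Algorithm~\ref{alg:greedy} returns the unconstrained Bayesian optimum, and the case where no item is refused, all of which give greedy optimality. Solving the LP relaxation by ``fill greedily, then take a fractional piece of $\boldsymbol{o}_{j+1}$'' yields
\begin{equation}
K^*\le K_g+\tfrac{W-\sum_{i\le j}w(\boldsymbol{o}_i)}{w(\boldsymbol{o}_{j+1})}\,p(\boldsymbol{o}_{j+1})<K_g+p(\boldsymbol{o}_{j+1}),
\end{equation}
with the strict inequality coming from the leftover capacity being strictly less than $w(\boldsymbol{o}_{j+1})$.

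Third, I would convert this into a bound for Problem~(B). Since $\mathrm{Greedy}(B)=U+K_g$ and $\mathrm{OPT}(B)=U+K^*$, the claim is equivalent to $U+2K_g>K^*$. If $2K_g\ge K^*$ the inequality follows immediately from $U>0$. Otherwise, combining $K^*<K_g+p(\boldsymbol{o}_{j+1})$ with $p(\boldsymbol{o}_{j+1})\le G(\boldsymbol{o}_{j+1})$ and $U=\gamma=\sum_{\boldsymbol{o}}H(\boldsymbol{o})$, it suffices to show $U+K_g\ge p(\boldsymbol{o}_{j+1})$. I would try to obtain this by opening up the product structure of $G$ and $H$ under the paper's standing assumption $P_m^i+P_f^i<1$: each item that was packed satisfies $G/H\ge G(\boldsymbol{o}_{j+1})/H(\boldsymbol{o}_{j+1})$, which lower-bounds $K_g$ in terms of $p(\boldsymbol{o}_{j+1})$ and the packed weight, while $U$ absorbs the complementary $H$-mass outside the knapsack.

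The delicate step is precisely this last one. The textbook $1/2$-approximation for $0$-$1$ knapsack only delivers $\max(K_g,p(\boldsymbol{o}_{j+1}))\ge K^*/2$, not $K_g\ge K^*/2$, so the strict bound for Problem~(B) must come from the additive constant $U=\gamma$ and from the probabilistic structure of $G,H$ rather than from the knapsack argument in isolation. Pinning down how the $H$-mass left outside $X_g$ controls $p(\boldsymbol{o}_{j+1})-K_g$ from above---and handling the tie-breaking by smaller $H(\boldsymbol{o})$ in edge cases where $G/H$ coincides---is the main obstacle I expect.
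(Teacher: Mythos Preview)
Your knapsack reduction and the identification of Algorithm~\ref{alg:greedy} as the profit-to-weight greedy are exactly what the paper does: it writes $OPT=U+(C-C')$, where $C-C'$ is the optimum of $\max\sum_i (G(\boldsymbol{o}_i)-H(\boldsymbol{o}_i))x_i$ subject to $\sum_i H(\boldsymbol{o}_i)x_i\le U-\alpha\gamma$, and it notes that sorting by $G/H$ coincides with sorting by $(G-H)/H$.

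The divergence is precisely at your ``delicate step.'' The paper does \emph{not} open up the product form of $G$ and $H$, and it does not use $P_m^i+P_f^i<1$ in this argument; it simply cites the textbook half-approximation for knapsack and asserts that the greedy attains at least $\tfrac{1}{2}(C-C')$, hence $APX\ge U+\tfrac{1}{2}(C-C')$, after which $APX/OPT>1/2$ follows from $U>0$. Your worry---that the standard result only yields $\max\{K_g,p(\boldsymbol{o}_{j+1})\}\ge K^*/2$, not $K_g\ge K^*/2$---is well taken, but the paper glosses over it rather than resolving it. So the structural argument you sketch (controlling $p(\boldsymbol{o}_{j+1})-K_g$ via the unpacked $H$-mass) is absent from the paper. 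If your goal is to reproduce the paper's proof, you should just invoke the $1/2$ knapsack bound as a black box and then use $U>0$; if your goal is full rigor, be aware that the paper leaves exactly the gap you flag, and Algorithm~\ref{alg:greedy} as written never compares against the best single item.
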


\begin{proof}
Recall the following notations: $A=\sum\limits_{\boldsymbol{o}:G(\boldsymbol{o})<H(\boldsymbol{o})}{H(\boldsymbol{o})}$, $B=\sum\limits_{\boldsymbol{o}\in \chi}{G(\boldsymbol{o})}$, $B'=\sum\limits_{\boldsymbol{o}\in \chi}{H(\boldsymbol{o})}$, $C=\sum\limits_{\boldsymbol{o}\in \psi}{G(\boldsymbol{o})}$, $C'=\sum\limits_{\boldsymbol{o}\in \psi}{H(\boldsymbol{o})}$, and $W=A+B+C$. We define $U=A+B'+C'=\sum\limits_{\boldsymbol{o}}{H(\boldsymbol{o})}$. Let $APX$ be the solution to Problem (B) output by Algorithm~\ref{alg:greedy}. Let $OPT$ be the optimal solution to Problem (B). Then, we have 
\vspace{-0.5em}
\begin{equation}
\vspace{-0.5em}
OPT=W-(B-B')
\label{eq:opt}
\end{equation}
\[= A+C+B'=(A+B'+C')+(C-C')=U+(C-C').
\vspace{-0.3em}
\]
Equation~(\ref{eq:opt}) holds by Figure~\ref{fig:bin} since moving observations from $O=0$ to $O=1$ loses $B-B'$ in throughput compared to the optimal solution with no constraint, which is $W$. Note that $(C-C')$ is the optimal solution to Problem~(\ref{eq:cons}) and it is the difference of contribution to throughput between keeping observations with $G(\boldsymbol{o})\ge H(\boldsymbol{o})$ and $f^*(\boldsymbol{o})=0$ in $O=0$ and moving them to $O=1$. 
\[\vspace{-0.5em}
\max \sum\limits_{G({\boldsymbol{o}}_i)\ge H({\boldsymbol{o}}_i),i=1,\cdots, l}{(G({\boldsymbol{o}}_i)-H({\boldsymbol{o}}_i)) x_i}\]
\begin{equation}
\mbox{s.t. }\sum\limits_{G({\boldsymbol{o}}_i) \ge H({\boldsymbol{o}}_i),i=1,\cdots, l}{H({\boldsymbol{o}}_i) x_i}\le U-\alpha \times \gamma
\label{eq:cons}
\end{equation}
\[\vspace{-0.3em}
x_i \in \{0,1\}\mbox{ for all }i,\]
where observations with $G(\boldsymbol{o})\ge H(\boldsymbol{o})$ are labeled in an arbitrary order. Next, we will show that the constraint of Problem~(\ref{eq:cons}) and that of Problem (B) are equivalent. As shown in Figure~\ref{fig:bin}, we have 
\vspace{-0.5em}
\[
\vspace{-0.2em}
A+B'\ge \alpha \times \gamma \Leftrightarrow A+B'-\alpha \times \gamma\ge 0\]
\begin{equation}
\Leftrightarrow A+B'+C'-\alpha\times \gamma\ge C' \Leftrightarrow C'\le U-\alpha\times \gamma .
\vspace{-0.3em}
\end{equation}

Problem~(\ref{eq:cons}) is a Knapsack Problem \cite{Vijay}. By the following greedy approach, at least $1/2$ of $(C-C')$ can be achieved \cite{Vijay}: choosing observations with $G({\boldsymbol{o}}_i)\ge H({\boldsymbol{o}}_i)$ from the highest $\frac{G({\boldsymbol{o}}_i)-H({\boldsymbol{o}}_i)}{H({\boldsymbol{o}}_i)}$ to the lowest until (\ref{eq:cons}) is violated, which is exactly what we do in Algorithm~\ref{alg:greedy} since $\frac{G({\boldsymbol{o}}_i)-H({\boldsymbol{o}}_i)}{H({\boldsymbol{o}}_i)}\ge \frac{G({\boldsymbol{o}}_j)-H({\boldsymbol{o}}_i)}{H({\boldsymbol{o}}_j)}$ if and only if $\frac{G({\boldsymbol{o}}_i)}{H({\boldsymbol{o}}_i)}\ge \frac{G({\boldsymbol{o}}_j)}{H({\boldsymbol{o}}_j)}$. Hence, $APX=U+1/2(C-C')$ holds. Since $U>0$, we always have $APX/OPT>1/2$ for Problem (B). 
\end{proof}

So far, we have shown that the greedy approximation algorithm (Algorithm~\ref{alg:greedy}) exists for Problem (B) with an approximation factor strictly greater than $1/2$. When $U\gg C-C'$, this factor could be arbitrarily close to $1$.

\subsection{Pseudo-Polynomial Implementation}
\label{subsec:comp}
In Lines~$3$, $6$, $9$ and $12$ of Algorithm~\ref{alg:greedy}, exponential number of observations are involved in the worst case due to its combinatorial nature. We design a pseudo-polynomial time algorithm by means of dynamic programming for the implementation. The running time of a pseudo-polynomial time algorithm is polynomial in the numeric value of the input, which is exponential in the length of them assuming they are rational numbers \cite{Vijay}. For simplicity, we assume $(1-T_c){\pi}_0=\gamma$ in Algorithm~\ref{alg:pseudo}, which can though be extended to general cases without the assumption easily.

\begin{algorithm}[t]
    \caption{{\footnotesize Pseudo-Polynomial Algorithm to Find the Joint Distribution of $(\log{\frac{G(\boldsymbol{o})}{H(\boldsymbol{o})}},\log{H(\boldsymbol{o})})$}}\label{alg:pseudo}
    {\scriptsize Input: $N$, $P_f^i$, $P_m^i$ for all $i$ \\Output: $C(N,j,j')$ for all $j$, $j'$}
    \begin{algorithmic}[1]
    {\footnotesize \STATE $y_i\leftarrow round(\log{\frac{1-P_f^i}{P_m^i}},r)\times 10^r$ for all $i$
	\STATE $z_i\leftarrow round(\log{\frac{P_f^i}{1-P_m^i}},r)\times 10^r$ for all $i$
	\STATE ${\lambda}_i\leftarrow round(\log{P_m^i},r)\times 10^r$ for all $i$
	\STATE ${\mu}_i\leftarrow round(\log{(1-P_m^i)},r)\times 10^r$ for all $i$
	\STATE $M\leftarrow \sum\limits_{i=1}^{N}{\max{\{y_i,z_i\}}}$, $m\leftarrow \sum\limits_{i=1}^{N}{\min{\{y_i,z_i\}}}$
	\STATE $M'\leftarrow\max{\{\max\limits_{i}{{\lambda}_i},\max\limits_{i}{{\mu}_i}\}}$, $m'\leftarrow\sum\limits_{i=1}^{N}{\min{\{{\lambda}_i,{\mu}_i\}}}$
	\STATE $C(i,j,j')\leftarrow 0$ for all $i$, $j$, $j'$, $C(1,y_1,{\lambda}_1)\leftarrow 1$, $C(1,z_1,{\mu}_1)\leftarrow 1$
	\FOR{$i=1$ to $N-1$}
		\FOR{$j=m$ to $M$}
			\STATE \textbf{for} $j'=m'$ to $M'$ \textbf{do} $C(i+1,j,j')=C(i,j-y_{i+1},j'-{\lambda}_{i+1})+C(i,j-z_{i+1},j'-{\mu}_{i+1})$
			
		\ENDFOR
	\ENDFOR}
    \end{algorithmic}
\end{algorithm}

In Algorithm~\ref{alg:pseudo}, dynamic programming is applied to calculate the joint distribution of $\log{\frac{G(\boldsymbol{o})}{H(\boldsymbol{o})}}$ and $\log{H(\boldsymbol{o})}$, which counts the number of observations with the same $\log{\frac{G(\boldsymbol{o})}{H(\boldsymbol{o})}}$ and the same $\log{H(\boldsymbol{o})}$. Note that this algorithm does not require future information - only the collection of all sensing results from SUs in the current time slot is required. Lines~$3$, $6$, $9$ and $12$ of Algorithm~\ref{alg:greedy} can be calculated based on these counts. $round(a,r)$ rounds $a$ to $r$ decimal places. We use $round(a,r)\times 10^r$ to scale and round a real $a$ to an integer. The rounding error will be discussed in the simulation. $M$ and $m$ specify the maximum and minimum contribution an observation $\boldsymbol{o}$ can have to $\log{\frac{G(\boldsymbol{o})}{H(\boldsymbol{o})}}$ respectively, while $M'$ and $m'$ specify the maximum and minimum contribution an observation $\boldsymbol{o}$ can have to $\log{H(\boldsymbol{o})}$ respectively. $C(N,j,j')$ records the number of observations with $\log{\frac{G(\boldsymbol{o})}{H(\boldsymbol{o})}}$ (after rounding) equal to $j$ and $\log{H(\boldsymbol{o})}$ (after rounding) equal to $j'$. Boundary conditions are set in Line~$7$. Lines~$8$-$10$ use iterations to find $C(i,j,j')$ for all $i=1,\cdots,N$, $m\le j\le M$ and $m'\le j'\le M'$. The recursive function in Line~$10$ matches the fact that when the $0$ observation from SU $i+1$ is added to observations from SU $1$ to $i$, $\log{\frac{G(\boldsymbol{o})}{H(\boldsymbol{o})}}$ is added by $y_{i+1}$ and $\log{H(\boldsymbol{o})}$ is added by ${\lambda}_{i+1}$; on the other hand, when the $1$ observation from SU $i+1$ is added to observations from SU $1$ to $i$, $\log{\frac{G(\boldsymbol{o})}{H(\boldsymbol{o})}}$ is added by $z_{i+1}$ and $\log{H(\boldsymbol{o})}$ is added by ${\mu}_{i+1}$. Note that Line~$10$ may encounter $C(i,j,j')$ beyond the boundaries of $j$ or $j'$, the value of which will be treated as $0$. The time complexity is $O(N(M-m)(M'-m'))$, which is pseudo-polynomial.

After the $C(N,j,j')$ distribution is found, Lines~$3$, $6$, $9$ and $12$ of Algorithm~\ref{alg:greedy} can be calculated accordingly, the time complexity of which is dominated by $O(N(M-m)(M'-m'))$. 

\section{Sensing Set Identification}
\label{sec:uncertain_set}
In this section, we formulate a new problem where the SU-BS is free to choose any subset of $S$ as the sensing set and maximizes the expected throughput of the system. We define $d$ as the homogeneous reporting delay of the sensing results from an SU to the SU-BS, and $\tilde{d}$ as the miscellaneous delay which covers all the processing required after the collection of sensing results at the SU-BS in $T_c$. No matter how many SUs are chosen in the sensing set, we always allocate the length of $T_c$ as the control slot. Thus, the control overhead is still a constant in this section. Two cases are considered: $Nd+\tilde{d}\le T_c$, where all SUs are allowed in the sensing set (Section~\ref{subsec:const}); $Nd+\tilde{d}\ge T_c$, where at most $k=\lfloor \frac{T_c-\tilde{d}}{d} \rfloor$ SUs are allowed in the sensing set (Section~\ref{subsec:exhaustive}). We show that the system throughput is monotonic over the number of SUs chosen in $S_0$ in Proposition~\ref{prop:more_better}. The hardness of the constrained problem with $Nd+\tilde{d}\ge T_c$ is unknown and there is no efficient algorithm proposed for this type of problem so far. 

\subsection{$Nd+\tilde{d}\le T_c$}
\label{subsec:const}

Without the constraint of the number of SUs in $S_0$, we will show that the full set gives the most information. First, we define 
\vspace{-0.5em}
\begin{equation}
P(\boldsymbol{o}^{S_0}|B=0)=\prod\limits_{i\in S_0,o_i=1}{P_f^i}\prod\limits_{j\in S_0,o_j=0}{(1-P_f^j)},\vspace{-0.5em}
\end{equation}
which is the probability of a particular observation vector $\boldsymbol{o}^{S_0}$ where $S_0$ is the sensing set occurring given $B=0$, and 
\vspace{-0.2em}
\begin{equation}
P(\boldsymbol{o}^{S_0}|B=1)=\prod\limits_{i\in S_0,o_i=1}{(1-P_m^i)}\prod\limits_{j\in S_0,o_j=0}{P_m^j},\vspace{-0.5em}
\end{equation} 
which is the probability of a particular observation vector $\boldsymbol{o}^{S_0}$ where $S_0$ is the sensing set occurring given $B=1$. Then we formulate the problem as follows:

Problem (C):
\vspace{-0.8em}

\[\max\limits_{f,S_0}{(1-T_c){\pi}_0 \sum\limits_{f(\boldsymbol{o})=0}{P(\boldsymbol{o}^{S_0}|B=0)}}
\vspace{-0.5em}\]
\[+\gamma \sum\limits_{f(\boldsymbol{o})=1}{P(\boldsymbol{o}^{S_0}|B=1)}.
\vspace{-0.5em}\]


\begin{proposition}
\label{prop:more_better}
Let $F^*(S_0)$ be the optimal solution to Problem~(C) with $S_0$ fixed. Then 
\vspace{-0.3em}
\begin{equation}
F^*(\{i_1,\cdots,i_k,i_{k+1}\}) \ge F^*(\{i_1,\cdots,i_k\}).
\end{equation}
\end{proposition}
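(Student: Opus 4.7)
\medskip

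\noindent\textbf{Proof plan.} The plan is to show that \emph{information never hurts}: any decision rule on the smaller sensing set $S_0 = \{i_1,\ldots,i_k\}$ can be lifted to a decision rule on the larger sensing set $S_0' = S_0 \cup \{i_{k+1}\}$ that achieves the same expected system throughput. Since $f^*$ on $S_0'$ is optimal over a superset of rules, monotonicity follows immediately.

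Concretely, let $f^*_{S_0}$ denote an optimal decision rule for Problem~(C) restricted to sensing set $S_0$. First I would define the lifted rule $\tilde f : \Omega^{k+1} \to \Omega$ on $S_0'$ by
\[
\tilde f(\boldsymbol{o}^{S_0'}) \;=\; f^*_{S_0}(\boldsymbol{o}^{S_0}),
\]
where $\boldsymbol{o}^{S_0}$ is the projection of $\boldsymbol{o}^{S_0'}$ onto the coordinates in $S_0$; in other words, $\tilde f$ simply ignores the extra observation $o_{i_{k+1}}$. This rule is admissible because it is a valid function $\Omega^{k+1}\to\Omega$.

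Next I would verify that the objective value of $\tilde f$ on $S_0'$ equals the objective value of $f^*_{S_0}$ on $S_0$. The key calculation is marginalization over the extra coordinate: using independence of the sensing results,
\[
\sum_{o_{i_{k+1}}\in\{0,1\}} P(\boldsymbol{o}^{S_0'}|B=0)
\;=\; P(\boldsymbol{o}^{S_0}|B=0)\bigl[P_f^{i_{k+1}}+(1-P_f^{i_{k+1}})\bigr]
\;=\; P(\boldsymbol{o}^{S_0}|B=0),
\]
and similarly $\sum_{o_{i_{k+1}}} P(\boldsymbol{o}^{S_0'}|B=1) = P(\boldsymbol{o}^{S_0}|B=1)$. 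Since $\tilde f$ depends only on $\boldsymbol{o}^{S_0}$, the sums $\sum_{\tilde f(\boldsymbol{o}^{S_0'})=0} P(\boldsymbol{o}^{S_0'}|B=0)$ and $\sum_{\tilde f(\boldsymbol{o}^{S_0'})=1} P(\boldsymbol{o}^{S_0'}|B=1)$ collapse (by first summing out $o_{i_{k+1}}$) exactly to the corresponding sums for $f^*_{S_0}$ on $S_0$. Hence the objective value achieved by $\tilde f$ on $S_0'$ equals $F^*(S_0)$.

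Finally, since $F^*(S_0')$ is the maximum over \emph{all} decision rules on $S_0'$ and $\tilde f$ is one such rule, $F^*(S_0') \ge F^*(S_0)$, which is the desired inequality. There is no real obstacle here; the only point that needs care is invoking the independence of individual sensing results (stated in the system model) to justify the marginalization step, and recognizing that optimality of $f^*_{S_0}$ is not needed for the equality $F(\tilde f) = F^*(S_0)$ — only for identifying $F^*(S_0)$ with the value of $f^*_{S_0}$.
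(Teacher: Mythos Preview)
Your argument is correct and is precisely the approach the paper sketches: lift the optimal rule on $S_0$ to $S_0\cup\{i_{k+1}\}$ by ignoring the extra observation, use independence to marginalize out $o_{i_{k+1}}$ so the objective is unchanged, and conclude by optimality over the larger rule class. The paper only gives the one-line idea (``ignore the new observation'') and defers details to a technical report; your marginalization step is exactly the missing detail.
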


\begin{proof}
(sketch) By adding the $k+1$-th SU into the sensing set, we can at least achieve the same system throughput as before by ignoring its observation. The detailed proof can be found in our technical report \cite{Li2011}.
\end{proof}

Using Proposition~\ref{prop:more_better}, we will prove it is the best choice to choose the full set as the sensing set in Corollary~\ref{cor:full_best}.

\begin{corollary}
\label{cor:full_best}
For all $D\subset S$, we have $F^*(S) \ge F^*(D)$.
\end{corollary}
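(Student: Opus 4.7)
The plan is to derive Corollary~\ref{cor:full_best} as a direct telescoping application of Proposition~\ref{prop:more_better}. Given any proper subset $D \subsetneq S$, I would enumerate the elements of $S \setminus D$ in an arbitrary order, say $j_1, j_2, \ldots, j_t$ with $t = |S \setminus D|$, and then construct the chain of sensing sets $D = D_0 \subset D_1 \subset \cdots \subset D_t = S$ by setting $D_r = D \cup \{j_1, \ldots, j_r\}$.

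Each consecutive pair $(D_{r-1}, D_r)$ differs by exactly one added SU, so Proposition~\ref{prop:more_better} applies verbatim to give $F^*(D_r) \ge F^*(D_{r-1})$ for every $r = 1, \ldots, t$. Chaining these $t$ inequalities yields
\begin{equation*}
F^*(S) \;=\; F^*(D_t) \;\ge\; F^*(D_{t-1}) \;\ge\; \cdots \;\ge\; F^*(D_0) \;=\; F^*(D),
\end{equation*}
which is precisely the claim.

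There is no substantive obstacle, since essentially all of the work is already absorbed into the one-step monotonicity established by Proposition~\ref{prop:more_better}, and the chain argument is insensitive to the order in which elements of $S \setminus D$ are indexed. The only mild point worth checking is notational: Proposition~\ref{prop:more_better} is stated for a sensing set of the form $\{i_1,\ldots,i_k\}$ together with a distinguished new element $i_{k+1}$, so at each step I would implicitly relabel the elements of $D_{r-1}$ as $\{i_1,\ldots,i_{|D_{r-1}|}\}$ and identify $j_r$ with $i_{|D_{r-1}|+1}$ to match the hypothesis exactly.
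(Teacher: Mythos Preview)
Your proposal is correct and essentially identical to the paper's own proof: the paper also enumerates the elements of $S\setminus D$ (ordering them from smallest to largest index, though as you note any order works) and chains Proposition~\ref{prop:more_better} along the resulting sequence $D\subset D\cup\{l_1\}\subset\cdots\subset S$ to obtain $F^*(D)\le F^*(S)$.
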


\begin{proof}
Given $D\subset S$, we index the elements in $S\setminus D$ from the smallest to the largest as $l_1,\cdots,l_m$ where $m=|S\setminus D|$ and $0\le m \le N$. By Proposition~\ref{prop:more_better}, we have $F^*(D)\le F^*(D\cup \{l_1\})\le F^*(D\cup \{l_1,l_2\})\le \cdots \le F^*(D\cup \{l_1,\cdots, l_m\})=F^*(S)$.
\end{proof}

By Corollary~\ref{cor:full_best}, Problem~(B) can be solved by first setting $S_0^*=S$ and then applying Algorithm~\ref{alg:Bayesian} to find the optimal decision rule. Note that the time complexity is still $O(1)$.

\subsection{$Nd+\tilde{d}> T_c$}
\label{subsec:exhaustive}

We also investigate the case where the number of SUs in $S_0$ is constrained, and state that it is unlikely to have an efficient algorithm to find the optimal solution. By Proposition~\ref{prop:more_better}, the problem can be formulated as follows:

Problem~(D):
\vspace{-0.8em}

\[\max\limits_{f,S_0}{(1-T_c){\pi}_0 \sum\limits_{f(\boldsymbol{o})=0}{P(\boldsymbol{o}^{S_0}|B=0)}}
\vspace{-0.5em}\]
\[+\gamma \sum\limits_{f(\boldsymbol{o})=1}{P(\boldsymbol{o}^{S_0}|B=1)}
\vspace{-0.5em}\]
\[\mbox{s.t. }|S_0|=k, 
\]
where $k=\lfloor \frac{T_c-\tilde{d}}{d} \rfloor$.

It has been shown in \cite{Pena} that no non-exhaustive search method over the subset of $S$ of size $k$ can always solve it optimally when observations are correlated. For the independent observation problem such as Problem (D), however, it is not clear whether exhaustive search would be necessary as shown in \cite{Van}. Many heuristics such as Sequential Forward Selection (SFS, \cite{SFS}), Sequential Backward Selection (SBS, \cite{SFS}) and their variations \cite{Kudo} have been proposed to solve problems of this type. Although we characterize the monotonic property of system throughput over the number of SUs in the sensing set, the complexity of the problem is not clear in the case when $T_c$ is small, compared to $N$.

\section{Simulations}
\label{sec:simu}
In this section, simulation results are presented for the performance of solutions proposed for Problems~(A), (B), (C) and (D). We first compare the performance of Bayesian decision rule (Algorithm~\ref{alg:Bayesian}), majority, AND and OR policies \cite{4533677} in Section~\ref{subsec:A}. Then the performance of the greedy algorithm for Problem (B) (Algorithm~\ref{alg:greedy}), the random selection and the optimal solution are presented in Section~\ref{subsec:B}. The performance of Sequential Forward Selection (SFS, \cite{SFS}) is compared with the optimal solution to Problem (D) in Section~\ref{subsec:CD}. In all simulation studies, we consider a cognitive radio network with $N=10$, $T_c=0.2$, ${\pi}_0=0.4$, and $\gamma=2$. For each parameter setting, we generate $30$ groups of $P_m^i$'s and $P_f^i$'s randomly, which represents the random geographical locations of SUs in a CRN.

\subsection{Performance of Bayesian Decision rule}
\label{subsec:A}

Algorithm~\ref{alg:Bayesian}, the Bayesian decision rule based algorithm, has been proven to be optimal in Section~\ref{sec:max}. In Figure~\ref{fig:all4}, we demonstrate the increase from majority, AND, OR rules in terms of system throughput, which is the objective function value of Problem (A). In majority rule, the decision is $1$ only when the majority of the SUs sense an active PU; in AND rule, the decision is $1$ only when all SUs sense an active PU; in OR rule, the decision is $1$ if any of the SUs senses an active PU. We vary $\gamma$, the average PU throughput, and $N$, the number of SUs, respectively. Among all four algorithms, Bayesian decision rule strictly outperforms the other three. Among them, the OR rule is better then AND and majority rules since the PU transmission is better protected by the OR rule. OR rule is too conservative to guarantee SU transmission.  

\begin{figure}[!t]
\centering
\subfigure[With different $\gamma$'s.]{
\includegraphics[scale=0.25]{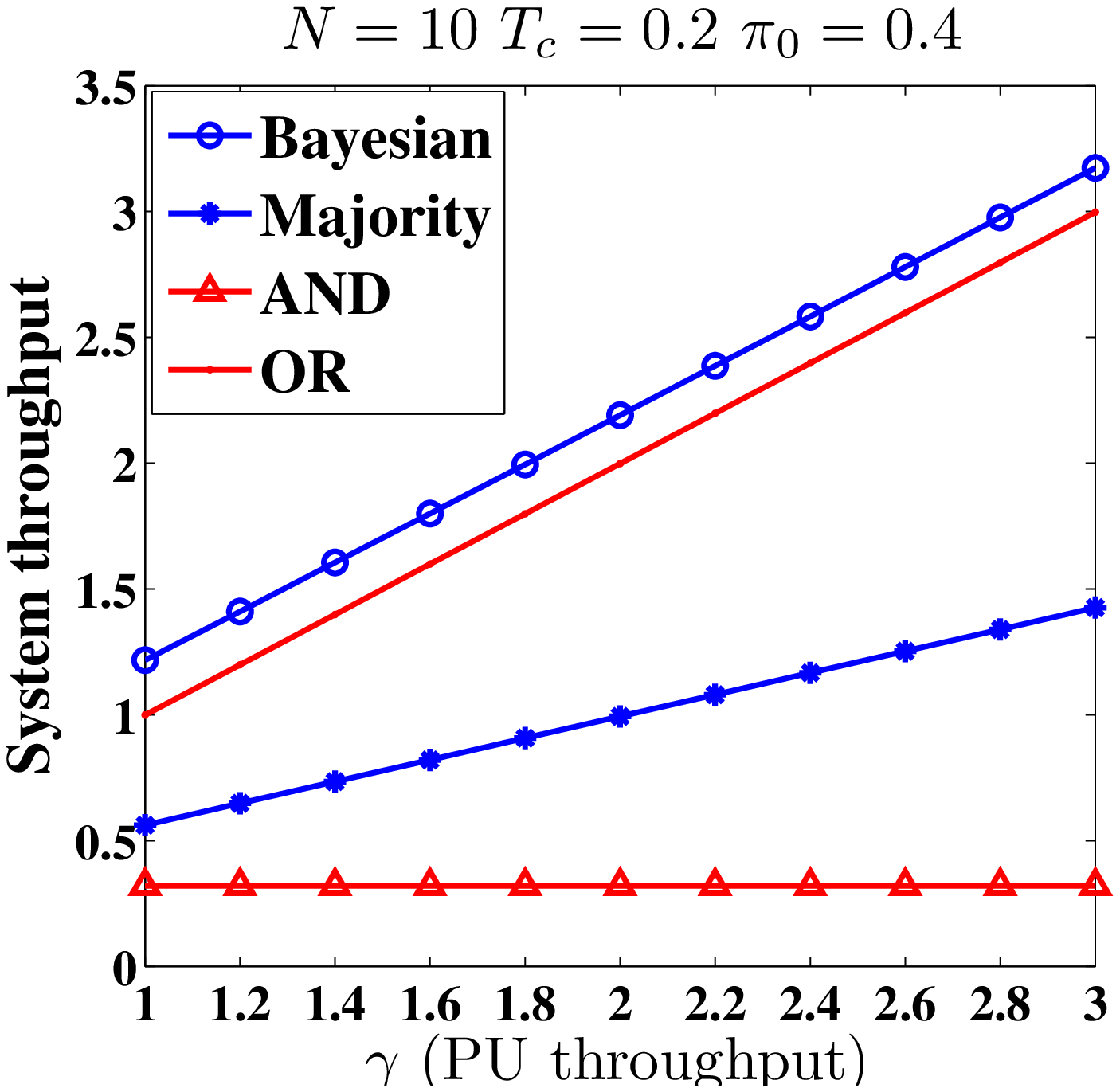}
\label{fig:gamma_B}
}
\hfil
\subfigure[With different $N$'s.]{
\includegraphics[scale=0.25]{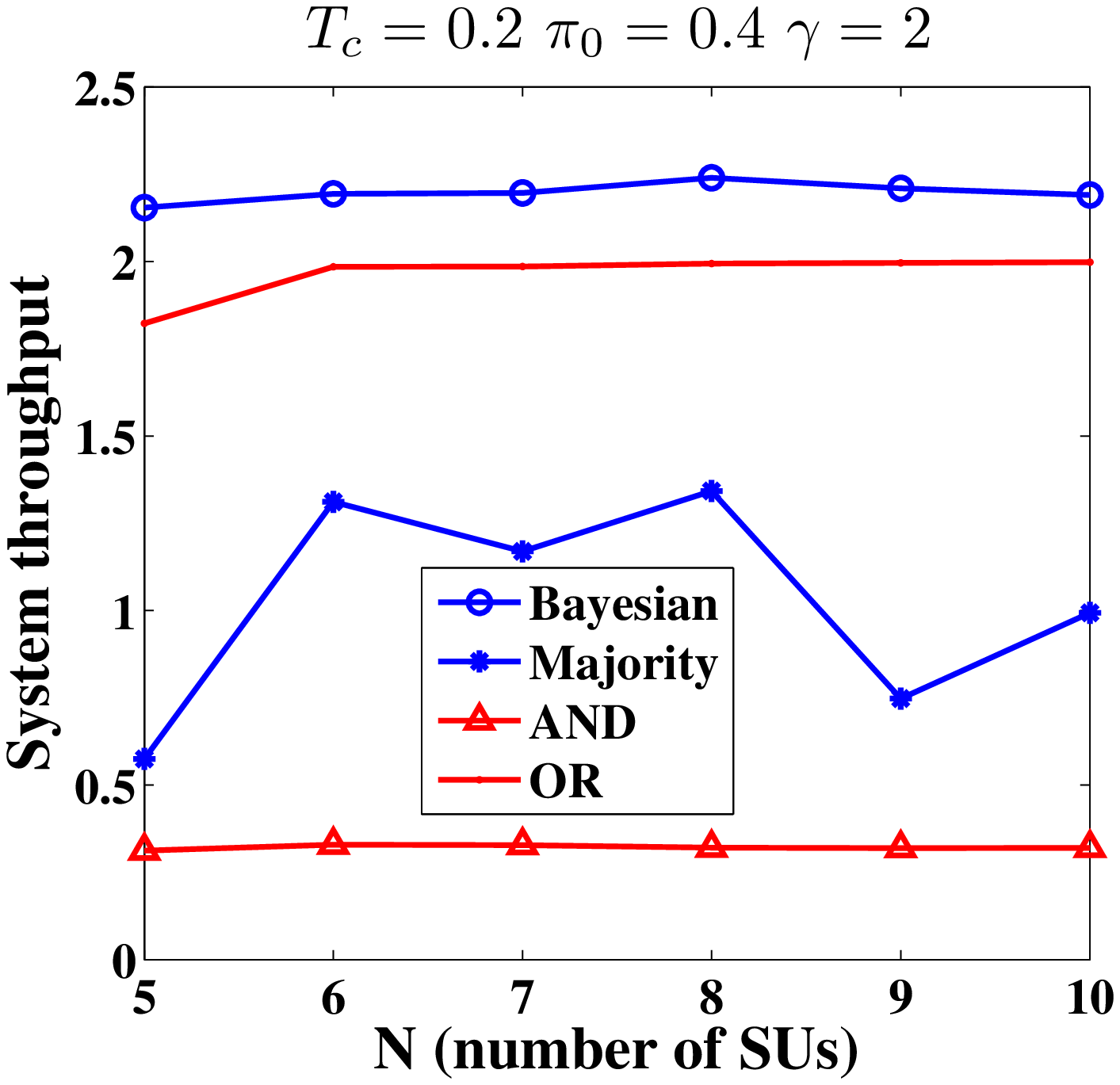}
\label{fig:overN_B}
}
\vspace{-0.5em}
\caption{Performance comparison of Bayesian decision rule, majority, AND and OR.}
\vspace{-1.5em}
\label{fig:all4}
\end{figure}

\subsection{Performance of Greedy Algorithm}
\label{subsec:B}

Greedy algorithm (Algorithm~\ref{alg:greedy}) can achieve strictly greater than $1/2$ of the optimal solution to Problem (B), as shown in Section~\ref{sec:PU_thru}. We compare its performance with that of random selection. Random selection is also based on Bayesian decision rule, which means Algorithm~\ref{alg:Bayesian} is first executed; after that, observations with $G(\boldsymbol{o})\ge H(\boldsymbol{o})$ are randomly selected to put in $O=1$ until the feasibility is satisfied. Thus the main difference between greedy algorithm and random selection lies in the selection criterion of observations with $G(\boldsymbol{o})\ge H(\boldsymbol{o})$ after the initial assignment based on Bayesian decision rule.

In addition, we set $\alpha=0.8$,  and $r=2$. We vary parameters such as $\gamma$, the average PU throughput, $\alpha$, the PU throughput constraint, $N$, the number of SUs, and $r$, the decimal places kept in Algorithm~\ref{alg:pseudo} in Figures~\ref{fig:gaN} and \ref{fig:r} respectively. To show the approximation factor of our algorithm accurately, two boundary cases are excluded in the result presentation where both greedy algorithm and random selection will give the optimal solution: 1) Bayesian decision rule gives the optimal solution; 2) It is optimal to put all observations in $O=1$. Hence, we only show their performance when at least one but not all observations with $G(\boldsymbol{o})\ge H(\boldsymbol{o})$ have to be moved to $O=1$. The average case and worst case performances are calculated based on the results after the exclusion.
 
In Figure~\ref{fig:gamma}, the approximation factors of greedy algorithm and random selection over the optimal solution are compared over different values of $\gamma$, the average PU throughput in the system. With a higher $\gamma$, the factor decreases gradually in both average and worst cases of Algorithm~\ref{alg:greedy} and it is the same with random selection although it fluctuates a bit due to the random selection. Greedy algorithm outperforms random selection in both average and worst cases. Potentially, the Bayesian decision rule assigns more SUs to $O=1$ compared to a lower $\gamma$ case. Thus the initial assignment is closer to $\alpha$, the PU throughput constraint. Since we only consider cases where Bayesian decision rule is not optimal, both algorithms tend to have worse performance when the initial assignment approaches $\alpha$ because it gets more sensitive to a wrong observation selection. 

\begin{figure*}[!t]
\centering
\subfigure[With different $\gamma$'s.]{
\includegraphics[scale=0.3]{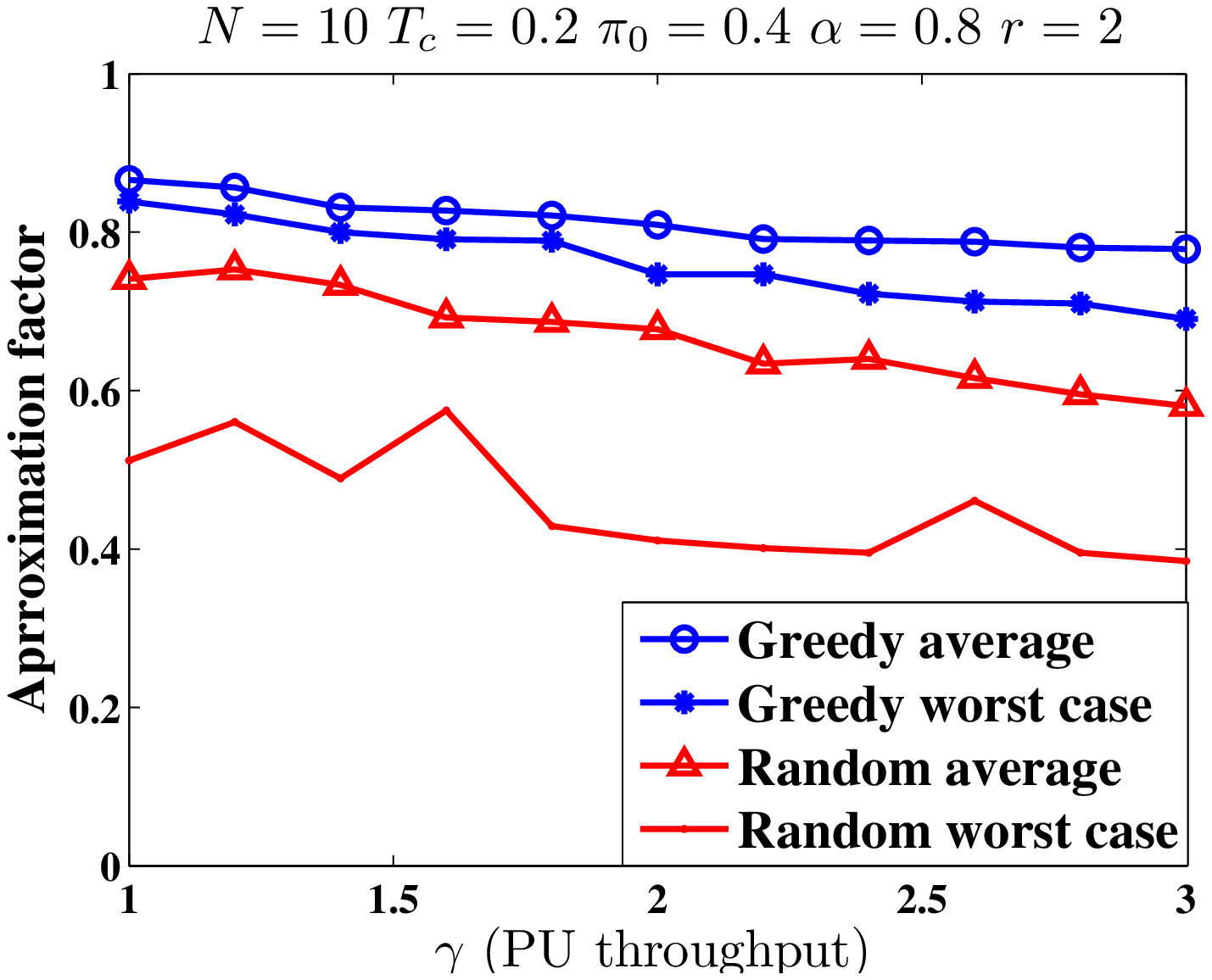}
\label{fig:gamma}
}
\hfil
\subfigure[With different $\alpha$'s.]{
\includegraphics[scale=0.32]{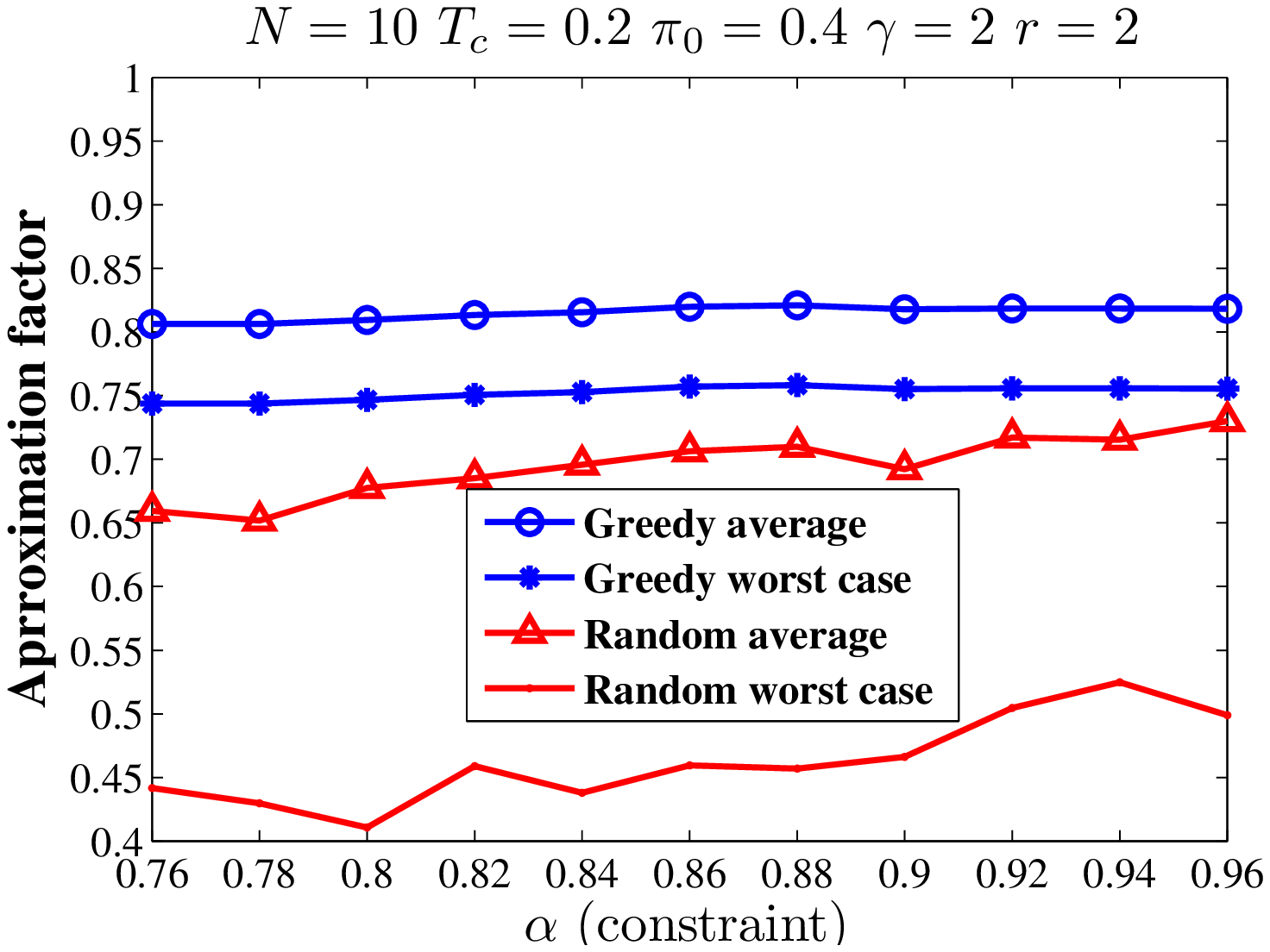}
\label{fig:alpha}
}
\hfil
\subfigure[With different $N$'s.]{
\includegraphics[scale=0.3]{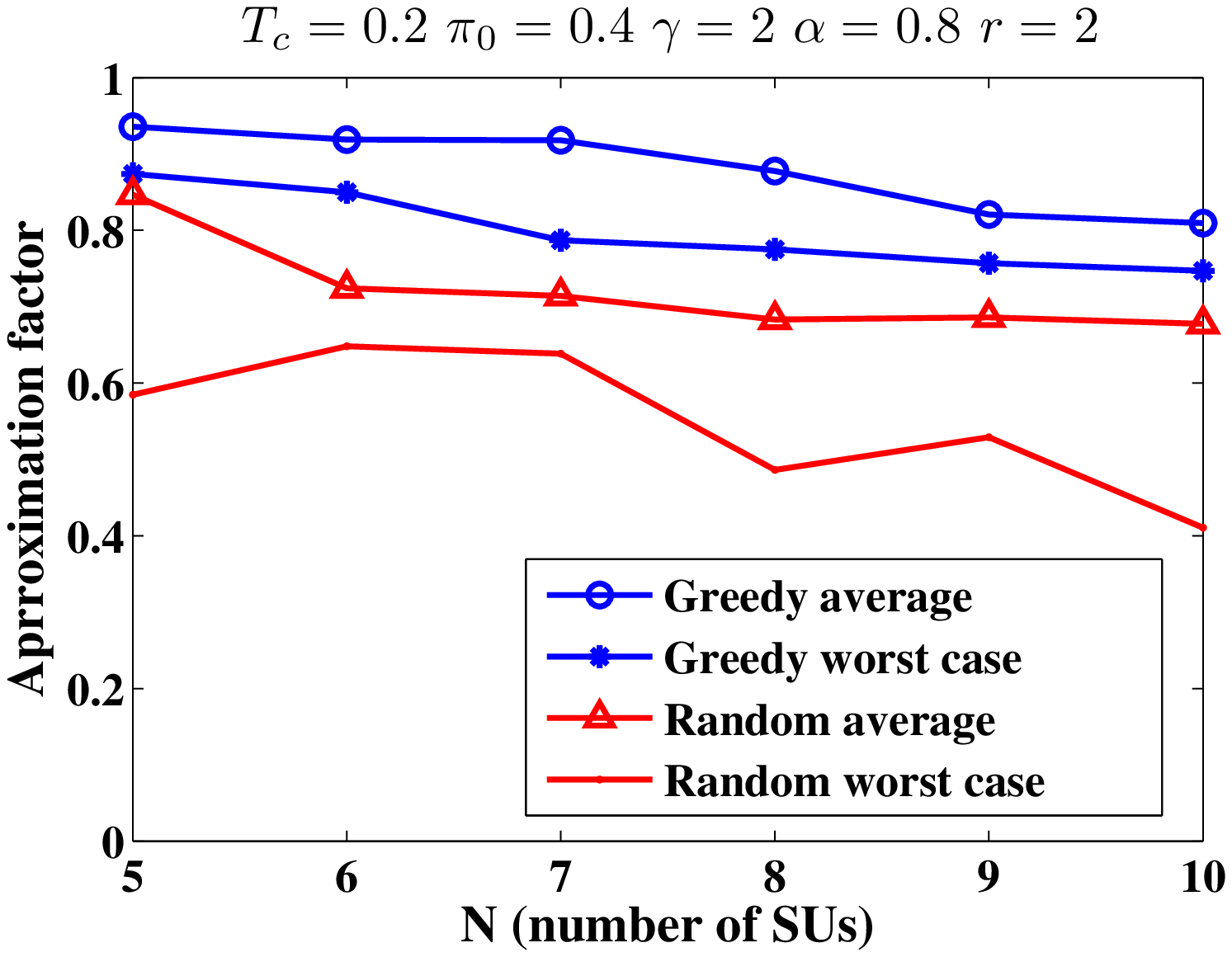}
\label{fig:overN}
}
\vspace{-1em}
\caption{Performance comparison of greedy algorithm and random selection.}
\vspace{-1.5em}
\label{fig:gaN}
\end{figure*}

In Figure~\ref{fig:alpha}, we vary $\alpha$, the PU throughput constraint, and compare the performance of greedy algorithm and random selection. Greedy algorithm is obviously better than random selection in both average and worst cases. Furthermore, the worst performance of all random runs generated in greedy algorithm wins over the average performance in random selection. Approximation factors in both of them increase, although it is minor in greedy algorithm. The increase can be explained similarly to that in Figure~\ref{fig:gamma}: a higher $\alpha$ makes the initial Bayesian decision assignment farther away from it so that the performance is less sensitive to the choice of observations. Due to its randomness, random selection may have poor performance with the factor as low as about $0.4$ in our simulation. 

We test the performance of greedy algorithm with different scales of the network. The number of SUs is varied from $5$ to $10$. The approximation factors of both algorithms degrade with more SUs. However, the factor is always far above $1/2$, as proved in Theorem~\ref{thm:greedy}. The random selection drops below $1/2$ in some cases as shown in the figure.   

In Algorithm~\ref{alg:pseudo}, we use $r$ as the decimal places kept for the calculations. Although the rounding error is not strictly characterized in Section III, we show the improvement of performance with higher $r$, which means higher resolution, in Figure~\ref{fig:r}. Note that the worst case performance is always greater than $1/2$. The average performance increases from about $0.7$ to about $0.95$, which is promising.   

\begin{figure}[tb]
    \begin{center}
    \setlength{\unitlength}{1in}
    \includegraphics[scale=0.3]{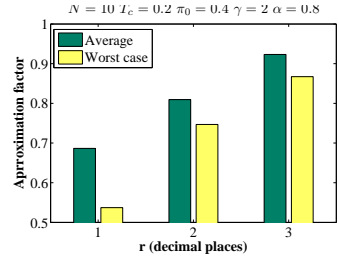}
    \end{center}
\vspace{-1.5em}
\caption{Performance comparison of greedy algorithm over different $r$'s.}
\vspace{-1.2em}
\label{fig:r}
\end{figure}

\subsection{Performance of Sequential Forward Selection}
\label{subsec:CD}

With $Nd+\tilde{d}\le T_c$, Problem (C) is the same as Problem (A) in that the optimal sensing set is the full set (Corollary~\ref{cor:full_best}). Thus, we focus on the performance of SFS, a heuristic for Problem (D) whose hardness is unknown so far. In SFS, we start from an empty sensing set. In every step, only the SU that is not yet chosen and has the largest marginal increase to the system throughput is added to the set. The algorithm stops when the size of the set reaches $k$. In Figure~\ref{fig:N_k}, we vary $k$, the size of the sensing set, from $1$ to $N$ and show the approximation factor of SFS compared to the optimal solution to Problem (D). When $k$ increases, the performance of SFS degrades until $k=N$ where all SUs are chosen in the sensing set so that the order of selection does not matter. SFS on average achieves at least $0.8$ of the optimal solution in our simulation although the factor is lower than $0.6$ in one of the worst cases.   

\begin{figure}[tb]
    \begin{center}
    \setlength{\unitlength}{1in}
    \includegraphics[scale=0.3]{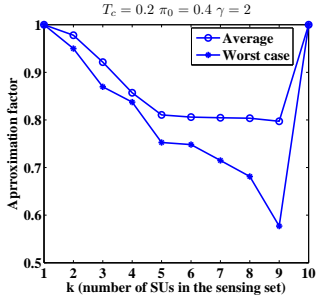}
    \end{center}
\vspace{-1.5em}
\caption{Performance comparison of SFS over different $k$'s.}
\vspace{-2em}
\label{fig:N_k}
\end{figure}

\section{conclusion}
\label{sec:con}
In this paper, we propose a series of algorithms to maximize the system throughput by cooperative sensing in cognitive radio networks. Bayesian decision rule is applied to solve the unconstrained optimization problem optimally. With the PU throughput constraint, the new problem is shown to be NP-hard and a greedy approximation algorithm with pseudo-polynomial time complexity is proposed. More importantly, the approximation factor is strictly greater than $1/2$. By restricting the number of SUs chosen for sensing, a new constrained optimization problem is formulated. We present a structural property that more SUs lead to better performance. However, the characterization of the combinatorial problem remains elusive, which is our future work. Moreover, we are also interested in investigating cases where the observations of SUs are correlated.  

\begin{spacing}{0.9}
\bibliographystyle{plain}
\small
\bibliography{Coop_Sens}
\end{spacing}


\end{document}